\documentclass[a4paper,UKenglish,cleveref,autoref]{lipics-v2021}

\title{Max Independent Set Remains NP-hard when Excluding a Planar Induced Minor}
\titlerunning{MIS Remains NP-hard when Excluding a Planar Induced Minor}

\author{\'{E}douard Bonnet}{CNRS, ENS de Lyon, Université Claude Bernard Lyon 1, LIP UMR 5668, Lyon, France \and \url{http://perso.ens-lyon.fr/edouard.bonnet}}{edouard.bonnet@ens-lyon.fr}{https://orcid.org/0000-0002-1653-5822}{}
\author{Yeonsu Chang}{Department of Mathematics, Hanyang University, Seoul, South Korea \and \url{https://yeonsuchang.com/}}{yeonsu@hanyang.ac.kr}{}{}

\authorrunning{\'E. Bonnet and Y. Chang}

\Copyright{Édouard Bonnet, Yeonsu Chang}

\category{}

\relatedversion{}

\supplement{}

\nolinenumbers 

\hideLIPIcs  

\EventEditors{John Q. Open and Joan R. Access}
\EventNoEds{2}
\EventLongTitle{42nd Conference on Very Important Topics (CVIT 2016)}
\EventShortTitle{CVIT 2016}
\EventAcronym{CVIT}
\EventYear{2016}
\EventDate{December 24--27, 2016}
\EventLocation{Little Whinging, United Kingdom}
\EventLogo{}
\SeriesVolume{42}
\ArticleNo{23}

\usepackage[utf8]{inputenc}  

\usepackage[T1]{fontenc}
\usepackage{lmodern}

\usepackage{amsmath}  
\usepackage{amssymb}
\usepackage{amsthm}
\usepackage{bbm}
\usepackage{accents}
\usepackage{complexity}

\usepackage{booktabs}
\usepackage{paralist}
\usepackage{fixmath}

\makeatletter
\newtheorem*{rep@theorem}{\rep@title}
\newcommand{\newreptheorem}[2]{%
\newenvironment{rep#1}[1]{%
 \def\rep@title{#2 \ref{##1}}%
 \begin{rep@theorem}}%
 {\end{rep@theorem}}}
\makeatother

\newcommand{\cmsot}{\textsf{CMSO$_2$}\xspace}
\newcommand{\msot}{\textsf{MSO$_2$}\xspace}

\newreptheorem{theorem}{Theorem}
\newreptheorem{lemma}{Lemma}
\newreptheorem{corollary}{Corollary}

\usepackage{pgfplots}
\usepackage{xspace}
\usepackage{tikz}
\usepackage{tikz-3dplot}

\usepackage[ruled,vlined,linesnumbered]{algorithm2e}

\usetikzlibrary{fit} 
\usetikzlibrary{arrows}
\usetikzlibrary{patterns}
\usetikzlibrary{calc}
\usetikzlibrary{shapes}
\usetikzlibrary{positioning}
\usetikzlibrary{math}
\usetikzlibrary{shapes.symbols, shapes.geometric}
\usetikzlibrary{decorations.pathreplacing,calligraphy}
\usetikzlibrary{decorations.pathmorphing, backgrounds}
\usepackage[scr=boondox,scrscaled=1.05]{mathalfa}

\newcommand{\mis}{\textsc{Max Independent Set}\xspace}
\newcommand{\smis}{\textsc{MIS}\xspace}

\newcommand{\maxcut}{\mathsf{MaxCut}}

\crefname{conjecture}{Conjecture}{Conjectures}

\newcommand{\Oh}{\mathcal{O}}

\newtheorem{question}{Question}

\newcommand\tw{\text{tw}}

\begin{document}

\maketitle

\begin{abstract}
  We show that there is a fixed planar graph $H$, namely the $5 \times 5$ grid, such that \textsc{Max Independent Set} remains NP-hard in $H$-induced-minor-free graphs.
  This refutes the Dallard--Milanič--Štorgel conjecture and a weakening of it by Gartland and Lokshtanov, and by Korhonen.
\end{abstract}

\section{Introduction}\label{sec:intro}

Graph classes excluding a~planar minor have bounded treewidth~\cite{RobertsonS86}.
Thus, in them, any problem definable in \msot logic (or its optimization variants) can be solved in polynomial time~\cite{Courcelle90}.
This cannot hold in classes excluding a~planar \emph{induced} minor (where induced minors are obtained by removing vertices and contracting edges).
Indeed, a~problem like \textsc{Dominating Set} is NP-complete on graphs excluding a~5-vertex path as an induced minor~\cite{Bertossi84} (or, equivalently, as an induced subgraph).

However, no similar direct argument dismisses that possibility for the \mis problem (\smis for short).
This led Dallard, Milanič, and Štorgel to ask whether a~polynomial-time algorithm exists for \smis on any planar-induced-minor-free class.

\begin{conjecture}[Dallard--Milanič--Štorgel~\cite{DallardMS24a}]\label{conj:dms}
For every planar graph $H$, \mis can be solved in polynomial time on graphs excluding $H$ as an induced minor.
\end{conjecture}

This question, turned into a~conjecture, has garnered a~large amount of related work in the past five years.
(Let $K_t$ denote the $t$-vertex clique, $K_{s,t}$ the biclique with $s$ vertices against $t$ vertices, $P_t$ the $t$-vertex path, $C_t$ the $t$-vertex cycle, and $W_t$ the wheel obtained from $C_t$ by adding a~universal vertex.)
\Cref{conj:dms} was shown when $H$ is $K_5$ minus an edge, $K_{2,t}$, $W_4$~\cite{DallardMS24a}, $P_6$~\cite{Grzesik22}, or any windmill graph (a~matching plus a~universal vertex)~\cite{BonnetDGTW26}.

A~natural relaxation of~\cref{conj:dms} to quasipolynomial time has seen further development.

\begin{conjecture}[Gartland--Lokshtanov, Korhonen~\cite{Korhonen23}, {\cite[Question 2]{BonnetDGTW26}}]\label{conj:qdms}
For every planar graph~$H$, \mis can be solved in quasipolynomial time on graphs excluding $H$ as an induced minor.
\end{conjecture}

While \cref{conj:dms} is already open when $H = P_7$, \cref{conj:qdms} is known when $H$ is any path, any cycle~\cite{Gartland21}, a~disjoint union of triangles~\cite{BonamyBDEGHTW24}, or even~$tC_3~\uplus~C_4$~\cite{BonnetDGTW26}.

Korhonen and Lokshtanov have obtained a~subexponential algorithm, as a~byproduct of balanced separators of size $\Oh(\sqrt m)$, where $m$ is the number of edges.

\begin{theorem}[{\cite[Corollary 1.2]{KorhonenL23}}]
  For every planar graph $H$, \mis can be solved in $2^{\Tilde{\Oh}_H(n^{2/3})}$ time on $n$-vertex graphs excluding $H$ as an induced minor.
\end{theorem}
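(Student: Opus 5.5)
The approach is to derive this subexponential algorithm from the balanced-separator theorem of Korhonen and Lokshtanov: for every planar $H$ there is a constant $c_H$ such that every $H$-induced-minor-free graph with $m$ edges has a balanced separator of size $c_H\sqrt{m}$, and such a separator (up to constant factors) can be found in polynomial time. That this yields the bound is the content of their Corollary~1.2. The plan is to combine the separator with a high-degree branching step, since the separator bound is in terms of edges rather than vertices.

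\textbf{Step 1 (branching on high degree).} Fix a threshold $d = n^{1/3}$. While the current graph $G$ contains a vertex $v$ of degree at least $d$, branch two ways:
\begin{itemize}
\item discard $v$, keeping $G - v$;
\item put $v$ in the solution, keeping $G - N[v]$.
\end{itemize}
Both subgraphs are induced, so they remain $H$-induced-minor-free. The first branch removes one vertex and the second removes at least $d$ vertices. The recursion $T(n) \le T(n-1) + T(n-d)$ gives $T(n) = 2^{\Oh((n/d)\log n)} = 2^{\Tilde{\Oh}(n^{2/3})}$ leaves. At each leaf the maximum degree is below $d$, hence $m < nd = n^{4/3}$.

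\textbf{Step 2 (separator recursion at a leaf).} At a leaf, compute a balanced separator $S$ with $|S| \le c_H\sqrt{m} = \Oh_H(n^{2/3})$. Enumerate all independent subsets $I \subseteq S$; there are at most $2^{|S|}$ of them. For each $I$, delete $S$ and $N(I)$, then solve each component of the remainder recursively and independently. Every component has at most a constant fraction of the vertices, so the recursion depth is $\Oh(\log n)$. Inside the recursion, re-apply Step 1 with the threshold chosen relative to the original $n$ (or the current size), keeping the degree small so that every separator still has size $\Tilde{\Oh}(n^{2/3})$.

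\textbf{Step 3 (running-time accounting).} This is the main obstacle, because the separator sizes at successive levels multiply the running time. If the separator sizes at depth $i$ decay geometrically, say $\Oh((\alpha^i n)^{2/3})$ for a constant $\alpha<1$, then the total exponent is a geometric sum bounded by $\Oh(n^{2/3})$. Branching on high degrees at each level contributes $\Tilde{\Oh}(n^{2/3})$ per level, and multiplying over $\Oh(\log n)$ levels only adds polylogarithmic factors in the exponent, which $\Tilde{\Oh}$ absorbs.

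I would structure the final argument as a single recursion on $n$: a high-degree branch when some degree is at least $n^{1/3}$, and a separator branch otherwise. I would then verify by induction that $T(n) \le 2^{C n^{2/3}\log^2 n}$, using the fact that the separator branch satisfies $2^{\Oh(n^{2/3})}\cdot n\cdot T(\beta n)$ for a constant $\beta<1$.
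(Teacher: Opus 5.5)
Your proposal is correct and is essentially the derivation the paper alludes to: the paper gives no proof, only citing Korhonen--Lokshtanov and their $\Oh_H(\sqrt m)$ balanced separators, and your argument works as written, since branching on vertices of degree at least $n^{1/3}$ forces $m < n^{4/3}$, the separator then has size $\Oh_H(n^{2/3})$, and your induction $T(n) \leqslant 2^{C n^{2/3}\log^2 n}$ closes in both the high-degree branch and the separator branch. One minor remark: you do not need the separator to be computable in deterministic polynomial time, because an exhaustive search for a balanced separator of size $\Oh_H(n^{2/3})$ costs only $2^{\Oh_H(n^{2/3}\log n)}$ per call, which fits within the budget.
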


Gartland and Lokshtanov have proposed strengthenings of~\cref{conj:dms,conj:qdms} for a~family of problems.
Fix a~positive integer $t$ and a~\cmsot sentence $\varphi$, i.e., a~closed formula with universal and existential quantifications over vertices, vertex subsets, and edge subsets, and counting predicates modulo some fixed integers.
The $(\tw \leqslant t,\varphi)$-\textsc{Weighted Maximum Induced Subgraph} ($(\tw \leqslant t,\varphi)$-\textsc{WMIS}) takes as input a~vertex-weighted graph $(G,w)$ and asks one either to find a~set $S \subseteq V(G)$ such that
\begin{compactitem}
\item $G[S]$ satisfies $\varphi$,
\item the treewidth of $G[S]$ is at~most $t$, and
\item the sum of weights of vertices in~$S$ is maximized subject to the above conditions,
\end{compactitem}
or to conclude that no such set $S$ exists.

Special cases of the $(\tw \leqslant t,\varphi)$-\textsc{WMIS} problem include, for instance, \smis, \textsc{Feedback Vertex Set}, and \textsc{Even Cycle Transversal}.
It was conjectured that \cref{conj:dms} more generally holds for $(\tw \leqslant t,\varphi)$-\textsc{WMIS}.

\begin{conjecture}[{\cite[Conjecture 1.4.2]{Gartland23}}]\label{conj:gl}
  For every planar graph $H$, positive integer $t$, and \cmsot sentence $\varphi$, $(\tw \leqslant t,\varphi)$-\textsc{Weighted Maximum Induced Subgraph} can be solved in polynomial time on $H$-induced-minor-free graphs.
\end{conjecture}

\subparagraph*{Our contributions.}
We refute \cref{conj:dms,conj:qdms,conj:gl} (unless P $=$ NP for \cref{conj:dms,conj:gl}, and unless NP $\subseteq$ QP for \cref{conj:qdms}) by showing the following.

\begin{theorem}\label{thm:main}
\smis is NP-hard in graphs excluding the $5 \times 5$ grid as an induced minor.
\end{theorem}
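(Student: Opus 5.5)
We reduce from a known NP-hard variant of \smis through a subdivision-based construction. The guiding fact is classical: if $G'$ is obtained from $G$ by subdividing every edge exactly twice, then $\alpha(G') = \alpha(G) + |E(G)|$. Subdivision alone cannot kill large grid induced minors, since a subdivided grid still contains the grid as an induced minor. So the reduction must also make the graph ``dense'' in a controlled way that destroys grids while keeping \smis equivalent. Our plan has three steps: construct the instance, prove that optima correspond, and prove that the $5\times 5$ grid is excluded as an induced minor.

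\emph{Step 1 (construction).} Start from \smis on cubic (or bounded-degree) graphs $G$, which is NP-hard. Fix an ordering of the vertices of $G$. Replace each edge $uv$ by a path of even length $2$ (two subdivision vertices $e_u,e_v$), and put all subdivision vertices into a single structure. Either make all the $e_u$-type vertices pairwise adjacent across different edges, or add complete bipartite connections between ``levels'', arranged according to the vertex ordering. Concretely, we take $H$ to be the graph whose vertex set is $V(G)\cup\{e_u,e_v : uv\in E(G)\}$, with edges $ue_u$, $e_ue_v$, $e_vv$, plus a clique on all subdivision vertices. The clique alone breaks the counting identity, because an independent set can now use at most one subdivision vertex. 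So instead we use a \emph{long path/ladder of cliques} indexed by edge slots: each edge gadget is a long path whose internal vertices are partitioned into a bounded number of cliques shared among all edges. This keeps the identity $\alpha(H)=\alpha(G)+c$ for an explicit $c$, while any induced minor model must route through few cliques.

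\emph{Step 2 (correctness).} We show $\alpha(H)=\alpha(G)+c$ by the standard exchange argument on each gadget path. An optimum picks half of the internal vertices of each path. It picks the extra one exactly when not both endpoints lie in the independent set, using that paths of length $\equiv 1 \pmod 2$ behave like edges.

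\emph{Step 3 (excluding the grid).} This is the main obstacle. Take an induced minor model of the $5\times5$ grid, with connected branch sets $B_{i,j}$. Any branch set meeting a clique $Q$ is adjacent to every other branch set meeting $Q$. Since the grid has maximum degree $4$ and contains no $K_3$, each clique meets at most two branch sets, which must moreover be grid-adjacent. The original vertices of $G$ form an independent set of bounded degree. We then argue that the branch sets avoiding the cliques live in a graph of pathwidth/treewidth at most $2$ or so (stars glued to short paths). The remaining branch sets number only boundedly many per clique, which contradicts the $5\times5$ grid having treewidth $5$ and a robust $2$-connected structure. If this direct route fails, we first try tuning the number of shared cliques and the path lengths, so that the forced structure is a ``linear'' arrangement of cliques whose induced minors have bounded pathwidth-like width below that of the $5\times5$ grid.
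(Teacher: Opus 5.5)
\textbf{There is a genuine gap.} Your proposal never commits to a construction, and the one it gestures at cannot make Steps~2 and~3 hold at the same time. You briefly mention complete bipartite connections between levels, but you then place the internal gadget vertices into cliques shared by all edges.

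If there are boundedly many such cliques, say $k$, then every independent set contains at most $k$ internal gadget vertices in total. But $\alpha(H)=\alpha(G)+c$ needs $c\geqslant|E(G)|$, namely half the internal vertices of every gadget path. This is the same failure you already noticed for a single clique, and the exchange argument of Step~2 collapses. If instead the cliques are indexed by edge slots and there are unboundedly many, you lose the count in Step~3 (``boundedly many branch sets meet cliques'').

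Separately, Step~3 is not yet an argument. The observation that a clique meets at most two branch sets is correct, since the grid is triangle-free. However, the claim that the remaining branch sets live in a graph of ``treewidth at most 2 or so'' is unsupported, and tuning the parameters afterwards is a plan rather than a proof. (Also, two subdivision vertices give a path of length~$3$, not~$2$.)

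\textbf{The missing idea is to use complete bipartite joins between two independent sides instead of cliques.} In the paper, each column of the host graph has two independent sides that are completely joined to each other, and within a row consecutive columns are completely joined. An independent set can then take arbitrarily many vertices of a column, provided they all lie on one side. So a column encodes a binary choice, which is why the reduction is from \textsc{Max Cut} rather than from \smis.

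Each edge $v_jv_{j'}$ of the \textsc{Max Cut} instance gets two private rows running from column $2j-1$ to column $2j'-1$, and the two end vertices of each row lie on opposite sides. These rows contribute $2(j'-j)+1$ to a maximum independent set exactly when $v_j$ and $v_{j'}$ lie on different sides, and $2(j'-j)$ otherwise.

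Crucially, the instance is an induced subgraph of a host graph that already excludes the grid, so exclusion is inherited regardless of the source graph. The exclusion comes from a separator argument, not from treewidth bookkeeping:
\begin{itemize}
\item every cycle of the host graph contains two vertices of one column on opposite sides (consider a leftmost vertex of a cycle after deleting the column bicliques);
\item the closed neighborhoods of these two vertices cover that whole column;
\item deleting a column separates the columns to its left from those to its right.
\end{itemize}
Hence an induced minor cannot contain three pairwise nonadjacent cycles linked pairwise by paths avoiding the neighborhood of the third cycle. The $5\times5$ grid does contain such a configuration, so it is excluded.
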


The proof is a~remarkably simple reduction from \textsc{Max Cut} on general graphs~$F$.

Look at \cref{fig:intermediate-graph}: this intermediate graph $G$ has the strong product of a~path with an edge (i.e., a~ladder with diagonals) for each \emph{row} (with each path occupying one of two sides), and a~biclique between the two sides in each column.
It is easy to show that a~cycle in~$G$ contains two vertices in the same column and on opposite sides (mentally remove all the column bicliques and consider the possible neighbors of a~leftmost vertex of the cycle).
The union of the neighborhoods of two such vertices contains the entire biclique and thus separates the left-hand side from the right-hand side.
Therefore, every graph with three vertex-disjoint nonadjacent cycles and a~path between every pair of cycles avoiding the neighborhood of the third cycle cannot be an induced minor of~$G$ (the $5 \times 5$ grid has this property; see~\cref{fig:grid-three-cycles}).
Indeed, consider the cycle whose neighborhood contains the biclique in between the bicliques of the other two cycles.

We then pick an appropriate induced subgraph $G_F$ of~$G$, where every odd-index column represents a~vertex of~$F$, and every edge of~$F$ ``occupies'' two rows.
From the rows of edge $e=uv \in E(F)$, we only keep the vertices in columns between that of~$u$ and that of~$v$, with the exception of four vertices; see~\cref{fig:edge-gadget}.
This defines the graph $G_F$.
Any independent set $I$ in~$G_F$ defines a~bipartition of~$V(F)$ where the part of $v \in V(F)$ depends on the side of the column of~$v$ that intersects~$I$.
This is well-defined due to the column bicliques.
Finally, the encoding of $e \in E(F)$ is such that one more vertex can be inserted in the independent set exactly when its endpoints are in distinct parts of the bipartition of~$V(F)$; see~\cref{fig:alpha-lower-bound}.

If we reduce from \textsc{Max Cut} on (sub)cubic graphs~$F$, the previous reduction produces graphs with $\Oh(|V(F)|^2)$ vertices.
Classic reductions thus imply the following, under the Exponential-Time Hypothesis (ETH), i.e., the assumption that there is a~$\lambda > 1$ such that $n$-variable \textsc{3-SAT} cannot be solved in time $\Oh(\lambda^n)$~\cite{Impagliazzo01}. 

\begin{corollary}\label{thm:eth}
Unless the ETH fails, \mis requires time $2^{\Omega(\sqrt{n})}$ in $n$-vertex graphs excluding the $5 \times 5$ grid as an induced minor.
\end{corollary}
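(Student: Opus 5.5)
The plan is to derive the corollary from the reduction behind \cref{thm:main} together with the standard ETH lower bound for \textsc{Max Cut} on graphs of maximum degree~3. First, recall the classic fact. Under the ETH, $n$-variable \textsc{3-SAT} has no $2^{o(n)}$-time algorithm. By the Sparsification Lemma~\cite{Impagliazzo01}, we may assume the formula has $m=\Oh(n)$ clauses. The standard linear-size reductions from \textsc{3-SAT} to \textsc{Max Cut}, followed by the usual degree-reduction gadgets (replacing a high-degree vertex by a bounded-degree expander or cycle-like structure of size linear in its degree), produce a subcubic graph~$F$ with $|V(F)|+|E(F)| = \Oh(n+m)$. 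Consequently, \textsc{Max Cut} on subcubic $N$-vertex graphs admits no $2^{o(N)}$-time algorithm unless the ETH fails. I would cite this as known rather than reprove it.

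Second, analyze the size of the reduction. Given a subcubic~$F$ with $N$ vertices, hence $|E(F)| \leqslant 3N/2$, the graph $G_F$ has $\Oh(N)$ columns, one per vertex of~$F$ plus separating columns. Each edge occupies two rows, giving $\Oh(|E(F)|)=\Oh(N)$ rows. Each row contributes at most a constant number of vertices per column (two sides of the ladder), so $|V(G_F)| = \Oh(N^2)$. The construction is computable in polynomial time. From the correctness proof of \cref{thm:main}, $\alpha(G_F)$ equals a fixed, easily computable quantity (depending only on $F$'s layout, e.g.\ the number of columns plus a constant per edge) plus the maximum cut size of~$F$. Thus computing $\alpha(G_F)$ yields the max cut of~$F$. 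Moreover, $G_F$ excludes the $5\times 5$ grid as an induced minor by the argument of \cref{thm:main}.

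Third, conclude. Suppose \smis could be solved in time $2^{o(\sqrt{n})}$ on $n$-vertex $5\times5$-grid-induced-minor-free graphs. With $n=\Oh(N^2)$ we get $\sqrt{n}=\Oh(N)$, and in fact $o(\sqrt n)=o(N)$. This would solve subcubic \textsc{Max Cut} in $2^{o(N)}\cdot \mathrm{poly}(N)$ time, contradicting the ETH.

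The only point needing care is the vertex count. One must check that the edge gadgets (the rows of $e=uv$, kept only between the columns of $u$ and~$v$ plus four extra vertices) and any padding columns really give $\Oh(|V(F)|\cdot|E(F)|)$ vertices, rather than something larger from extra columns inserted per edge. If the construction inserts additional columns per edge, I would verify that the total number of columns is still $\Oh(|V(F)|+|E(F)|)=\Oh(N)$ for subcubic~$F$, which preserves the quadratic bound.
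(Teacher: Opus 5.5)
Your proposal is correct and follows essentially the paper's route: you use ETH-hardness of \textsc{Max Cut} on (sub)cubic graphs with linear blow-up (the paper gets this from the Sparsification Lemma plus Yannakakis' occurrence splitting and cubic \textsc{Max Cut} reduction), and then note that $G_F$ has exactly $2|V(F)|-1$ columns and $2|E(F)|$ rows, hence at most $4|E(F)|(2|V(F)|-1) \leqslant 12|V(F)|^2$ vertices, so $o(\sqrt{n})$ becomes $o(|V(F)|)$. Two minor quibbles: the additive offset is $2\sum_{v_jv_{j'} \in E(F),\, j<j'}(j'-j)$ rather than ``columns plus a constant per edge,'' and a cycle-type gadget would not by itself enforce consistency in a \textsc{Max Cut} degree reduction, so deferring to the cited result is the right call.
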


\subparagraph*{Future work.}
Now, \cref{conj:dms,conj:qdms} can be turned into dichotomy questions.

\begin{question}\label{q:dichotomy}
  Which planar graphs $H$ are such that \mis can be solved in (quasi)polynomial time on graphs excluding $H$ as an induced minor?
\end{question}

In fact, we give a~negative answer to~\cref{conj:dms,conj:qdms} for a~graph simpler than the $5 \times 5$ grid: the cycle $C_9$ (with vertices $1, \ldots, 9$ in cyclic order) with three additional vertices respectively adjacent to 1~and 2, 4 and 5, and 7 and 8.

One can also turn to approximation schemes.
The following conjecture is not refuted by the present work.

\begin{conjecture}\label{conj:qptas}
  For every planar graph~$H$, \mis admits a~quasipolynomial-time approximation scheme (QPTAS) on graphs excluding $H$ as an induced minor.
\end{conjecture}

\Cref{conj:qptas} is actually known to be implied by a~structural conjecture of~Gartland and Lokshtanov.

\begin{conjecture}[Gartland--Lokshtanov~\cite{Gartland23}]\label{conj:struct-gl}
For every planar graph $H$, there is a~constant $k := k(H)$ such that every $n$-vertex $H$-induced-minor-free graph $G$ admits a~set $X \subseteq V(G)$ of size at~most~$k$ such that the closed neighborhood of~$X$ is a~balanced separator of $G$.
\end{conjecture}

A~more adventurous (but still open) conjecture is the strengthening of~\cref{conj:qptas} to a~polynomial-time approximation scheme (PTAS).

\section{Hardness of MIS in Planar-Induced-Minor-Free Classes}

For any two integers $i, j$, we set $[i,j] := \{k \in \mathbb Z \mid i \leqslant k \leqslant j\}$, and $[i] := [1,i]$.
We use the standard graph-theoretic notation.
For any graph $G$, $V(G)$ and $E(G)$ denote the vertex set and edge set, respectively, of~$G$.
If $S \subseteq V(G)$, then $G[S]$ denotes the subgraph of~$G$ induced by~$S$.

We denote by $\alpha(G)$ the \emph{independence number} of~$G$, that is, the size of a~largest subset of vertices of~$G$ that are pairwise nonadjacent.
We write $\maxcut(G)$ for the number of edges in a~maximum-cardinality cut of~$G$.

\begin{reptheorem}{thm:main}
  \smis is NP-hard in graphs excluding the $5 \times 5$ grid as an induced minor.
\end{reptheorem}

\begin{proof}
  We reduce \textsc{Max Cut} in general graphs to \mis in graphs excluding a~fixed planar induced minor.
  Let $F$ be any $n$-vertex $m$-edge \textsc{Max Cut} instance.
  We first build an intermediate graph~$G$ with vertex set
  \[ \{(i,j,s) \mid i \in [2m],~j \in [2n-1],~s \in \{0,1\}\},\]
  where we see $i$ as the row index, $j$ as the column index, and $s$ as the side, and edge set
  \[ \{(i,j,0)(i',j,1) \mid i, i' \in [2m],~j \in [2n-1]\}~\cup \]
  \[ \{(i,j,s)(i,j',s') \mid i \in [2m],~j, j' \in [2n-1], |j-j'|=1,~s, s' \in \{0,1\}\};\]
  see~\cref{fig:intermediate-graph}.

  We arbitrarily order the vertices of~$F$: $v_1, v_2, \ldots, v_n$.
  We set $C_j := \{(i,j,s) \mid i \in [2m],~s \in \{0,1\}\}$ to be the \emph{$j$th column}.
  We see $C_{2j-1}$ as the column of~$v_j$, for every $j \in [n]$.
  The columns of even index do \emph{not} represent a~vertex of~$F$.
  Similarly, a~\emph{row} is any set $\{(i,j,s) \mid j \in [2n-1],~s \in \{0,1\}\}$ with $i \in [2m]$.
  We reserve two rows for each edge $e \in E(F)$, and denote them by $R'_{e,0}$ and $R'_{e,1}$.
  For convenience, we denote by $e^0, e^1 \in [2m]$ the row indices of $R'_{e,0}, R'_{e,1}$, respectively. 

\begin{figure}[!ht]
\centering
\begin{tikzpicture}[
  edge/.style={draw=black, line width=.35pt, line cap=round},
  vertex/.style={circle, draw=black, line width=.45pt, minimum size=3.7pt, inner sep=0pt},
  side 0/.style={vertex, fill=white},
  side 1/.style={vertex, fill=black}
]
  \foreach \r/\y in {1/2.25,2/.75,3/-.75,4/-2.25} {
    \foreach \c/\x in {1/0,2/1.35,3/2.70,4/5.10,5/6.45,6/7.80} {
      \coordinate (v\r\c0) at ({\x-.11},{\y+.13});
      \coordinate (v\r\c1) at ({\x+.11},{\y-.13});
    }
  }

  \foreach \c in {1,2,3,4,5,6} {
    \foreach \r in {1,2,3,4} {
      \foreach \q in {1,2,3,4} \draw[edge] (v\r\c0) -- (v\q\c1);
    }
  }
  \foreach \r in {1,2,3,4} {
    \foreach \a/\b in {1/2,2/3,4/5,5/6} {
      \foreach \s in {0,1} {
        \foreach \t in {0,1} \draw[edge] (v\r\a\s) -- (v\r\b\t);
      }
    }
  }

  \foreach \r in {1,2,3,4} {
    \foreach \c in {1,2,3,4,5,6} {
      \node[side 0] at (v\r\c0) {};
      \node[side 1] at (v\r\c1) {};
    }
  }

  \foreach \y in {2.25,.75,-.75,-2.25} \node[fill=white, inner xsep=3pt] at (3.90,\y) {$\cdots$};

  \node[anchor=base] at (0,2.78) {$j=1$};
  \node[anchor=base] at (1.35,2.78) {$2$};
  \node[anchor=base] at (2.70,2.78) {$3$};
  \node[anchor=base] at (3.90,2.78) {$\cdots$};
  \node[anchor=base] at (5.10,2.78) {$2n-3$};
  \node[anchor=base] at (6.45,2.78) {$2n-2$};
  \node[anchor=base] at (7.80,2.78) {$2n-1$};

  \node[anchor=east] at (-.55,2.25) {$i=1$};
  \node[anchor=east] at (-.55,.75) {$2$};
  \node at (-.70,.12) {$\vdots$};
  \node[anchor=east] at (-.55,-.75) {$2m-1$};
  \node[anchor=east] at (-.55,-2.25) {$2m$};

  \node[side 0] at (2.65,-2.95) {};
  \node[anchor=west] at (2.8,-2.95) {$s=0$};
  \node[side 1] at (4.65,-2.95) {};
  \node[anchor=west] at (4.80,-2.95) {$s=1$};
\end{tikzpicture}
\caption{The intermediate graph $G$.}
\label{fig:intermediate-graph}
\end{figure}
  
  We finally define the following induced subgraph $G_F$ of~$G$.
  For every edge $e=v_jv_{j'} \in E(F)$ (with $j < j'$), we only keep from $R'_{e,0}$ the vertices $(e^0,2j-1,0)$, $(e^0,2j'-1,1)$, and all the vertices $(e^0,j'',s)$ with $2j-1 < j'' <2j'-1,~s \in \{0,1\}$, and from $R'_{e,1}$ the vertices $(e^1,2j-1,1)$, $(e^1,2j'-1,0)$, and all the vertices $(e^1,j'',s)$ with $2j-1 < j'' <2j'-1,~s \in \{0,1\}$.
  We denote by $R_{e,0} \subset R'_{e,0}$ and $R_{e,1} \subset R'_{e,1}$ the obtained subsets.
  This completes the construction; see~\cref{fig:edge-gadget}.
  We denote by $S_0$ the set of vertices of~$G_F$ whose last coordinate is 0 (on the side $s=0$), and by $S_1$ the set of vertices of~$G_F$ on the side $s=1$.

\begin{figure}[!ht]
\centering
\begin{tikzpicture}[
  edge/.style={draw=black, line width=.4pt, line cap=round},
  guide/.style={draw=black!24, densely dashed, line width=.35pt},
  vertex/.style={circle, draw=black, line width=.45pt, minimum size=3.7pt, inner sep=0pt},
  side 0/.style={vertex, fill=white},
  side 1/.style={vertex, fill=black}
]
  \foreach \r/\y in {0/1.15,1/-1.15} {
    \foreach \c/\x in {1/.95,2/1.90,3/2.85,4/3.80,5/5.35,6/6.30,7/7.25,8/8.20} {
      \coordinate (v\r\c0) at ({\x-.10},{\y+.15});
      \coordinate (v\r\c1) at ({\x+.10},{\y-.15});
    }
  }
  \coordinate (v000) at (0,1.30);
  \coordinate (v101) at (0,-1.30);
  \coordinate (v091) at (9.15,1.00);
  \coordinate (v190) at (9.15,-1.00);

  \draw[guide] (0,-1.65) -- (0,1.65);
  \draw[guide] (9.15,-1.65) -- (9.15,1.65);
  \node[anchor=south] at (0,1.72) {$C_{2j-1}$};
  \node[anchor=south] at (9.15,1.72) {$C_{2j'-1}$};
  \node[anchor=east] at (-.48,1.15) {$R_{e,0}$};
  \node[anchor=east] at (-.48,-1.15) {$R_{e,1}$};

  \foreach \c in {1,2,3,4,5,6,7,8} {
    \foreach \r in {0,1} {
      \foreach \q in {0,1} \draw[edge] (v\r\c0) -- (v\q\c1);
    }
  }
  \draw[edge] (v000) -- (v101);
  \draw[edge] (v190) -- (v091);

  \foreach \r in {0,1} {
    \foreach \a/\b in {1/2,2/3,3/4,5/6,6/7,7/8} {
      \foreach \s in {0,1} {
        \foreach \t in {0,1} \draw[edge] (v\r\a\s) -- (v\r\b\t);
      }
    }
  }
  \foreach \s in {0,1} {
    \draw[edge] (v000) -- (v01\s);
    \draw[edge] (v101) -- (v11\s);
    \draw[edge] (v08\s) -- (v091);
    \draw[edge] (v18\s) -- (v190);
  }

  \foreach \r in {0,1} {
    \foreach \c in {1,2,3,4,5,6,7,8} {
      \node[side 0] at (v\r\c0) {};
      \node[side 1] at (v\r\c1) {};
    }
  }
  \node[side 0] at (v000) {};
  \node[side 1] at (v101) {};
  \node[side 1] at (v091) {};
  \node[side 0] at (v190) {};

  \node[fill=white, inner xsep=3pt] at (4.575,1.15) {$\cdots$};
  \node[fill=white, inner xsep=3pt] at (4.575,-1.15) {$\cdots$};

  \node[side 0] at (3.45,-1.90) {};
  \node[anchor=west] at (3.58,-1.90) {$s=0$};
  \node[side 1] at (5.40,-1.90) {};
  \node[anchor=west] at (5.53,-1.90) {$s=1$};
\end{tikzpicture}
\caption{The restriction of $G_F$ to the two rows reserved for $e=v_jv_{j'}$.}
\label{fig:edge-gadget}
\end{figure}

  \begin{claim}
    Graph $G$ excludes the $5 \times 5$ grid as an induced minor.
  \end{claim}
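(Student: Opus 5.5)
The plan is to follow the outline from the introduction. First, show that every cycle of $G$ contains two vertices $(i,j,0)$ and $(i',j,1)$ in the same column and on opposite sides. Then show that the closed neighborhood of such a pair separates everything to the left of column $j$ from everything to its right. Finally, rule out a $5 \times 5$ grid induced minor by locating three disjoint, pairwise nonadjacent cycles in the grid whose images in $G$ would violate this separation property.

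\emph{Step 1 (cycles hit a column biclique).} Let $Z$ be a cycle of $G$, viewed as a subgraph, and let $j$ be the smallest column index it meets. Pick a vertex $x=(i,j,s)\in Z$. Its two cycle-neighbours lie in column $j$ or in column $j+1$ (same row $i$), since column $j-1$ is not used. If one of them is in column $j$, it is a biclique neighbour and lies on the opposite side, and we are done. Otherwise both lie in row $i$, column $j+1$. I would then look at the whole subpath of $Z$ inside column $j$. Suppose no two vertices on opposite sides of column $j$ lie on $Z$. Then all vertices of $Z$ in column $j$ are on one side, hence pairwise nonadjacent, so each of them has both cycle-neighbours in column $j+1$ of its own row. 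Now remove column $j$ and reroute. Alternatively, pick the column $j^\ast$ where $Z$ has a vertex in a row and the cycle must ``turn''. The cleanest route is a potential argument: consider the rows used by $Z$. Horizontal edges preserve the row, so $Z$ uses at least one vertical edge, which is necessarily a biclique edge inside some column $j$. Such an edge joins two vertices of column $j$ on opposite sides, and that is all we need. Indeed, only biclique edges change rows; if $Z$ lies entirely within one row, then $Z$ lives in the ladder with diagonals $P\boxtimes K_2$, and any cycle there also contains both $(i,j,0)$ and $(i,j,1)$ for its leftmost column $j$ (the two neighbours of a leftmost vertex are within columns $j$ and $j+1$ of the same row, and $(i,j+1,0),(i,j+1,1)$ are adjacent, but minimality over cycle length forces the use of $(i,j,1-s)$). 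The single-row case needs care, and I would handle it by direct inspection of $P\boxtimes K_2$.

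\emph{Step 2 (separation).} If $a=(i,j,0)$ and $b=(i',j,1)$, then $N[a]\cup N[b]\supseteq C_j$, because every side-$1$ vertex of $C_j$ is adjacent to $a$ and every side-$0$ vertex to $b$. Since every edge between columns goes between consecutive columns, removing $C_j$ disconnects columns $<j$ from columns $>j$. The same holds for $N[V(Z)]$ for any cycle $Z$, or for any connected set containing a cycle.

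\emph{Step 3 (excluding the grid).} Suppose the $5\times5$ grid is an induced minor of $G$, with connected branch sets $B_v$. In the grid, choose three disjoint $4$-cycles (for instance unit squares at positions $\{1,2\}^2$, $\{1,2\}\times\{4,5\}$, $\{4,5\}\times\{1,2\}$, or per the figure) that are pairwise nonadjacent and such that every pair is joined by a path avoiding the closed neighbourhood of the third. The union of the branch sets of each grid cycle induces a connected subgraph of $G$ containing a cycle, hence by Step~1 a pair of opposite-side vertices in a column $j_k$. Order the three cycles by $j_1\le j_2\le j_3$. Nonadjacency of the grid cycles, together with the inclusion $C_{j_k}\subseteq N[\text{cycle}]$, forces the $j_k$ to be distinct, and also forces the other two cycle images to lie strictly on one side of $C_{j_k}$. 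So the middle cycle has images of the others on both sides, since the images are connected and avoid $N[\cdot]$. The grid path between cycles $1$ and $3$ that avoids $N[\text{cycle }2]$ lifts to a connected set in $G$ avoiding $N[\text{image of cycle }2]\supseteq C_{j_2}$, which contradicts Step~2.

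The main obstacle is Step~1, namely the single-row case and making sure the witnessing pair lies inside the union of the branch sets. Step~3 also needs a concrete grid configuration, which should be routine.
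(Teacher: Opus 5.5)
Your plan is the paper's. Step~2 (the closed neighbourhood of an opposite-side pair in column $j$ contains $C_j$, which separates the columns on either side) and Step~3 (order the three pairwise nonadjacent squares by their witness columns; the middle one separates the other two) match the paper's argument, and your concrete choice of unit squares works. Your multi-row case of Step~1 is also fine, since a row-changing edge is an in-column edge between opposite sides.

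\textbf{The gap.} Your resolution of the single-row case, which you yourself flag as the main obstacle, is wrong. You assert that a cycle of $P\boxtimes K_2$ contains both $(i,j,0)$ and $(i,j,1)$ for its leftmost column $j$. This is false: the triangle $(i,j,0),(i,j+1,0),(i,j+1,1)$ is a cycle of $G$ with leftmost column $j$ that avoids $(i,j,1)$. Appealing to ``minimality over cycle length'' cannot rescue it. In Step~3 you must apply Step~1 to whatever cycle $G[U_k]$ contains ($U_k$ being the union of the branch sets of the $k$th square), and even a shortest such cycle can be exactly this triangle.

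\textbf{The fix.} The pair does not have to lie in the leftmost column, and your opening lines already contain the complete argument. Let $x=(i,j,s)$ be a vertex of $Z$ in the leftmost column. If a cycle-neighbour of $x$ lies in column $j$, it is some $(\cdot,j,1-s)$ and you have the pair. Otherwise both cycle-neighbours lie in row $i$, column $j+1$. That position holds only the two vertices $(i,j+1,0)$ and $(i,j+1,1)$, and the two neighbours are distinct, so they are exactly these: an opposite-side pair in column $j+1$.

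This settles Step~1 for every cycle of $G$, with no split into single-row and multi-row cases. It is precisely the paper's leftmost-vertex argument. Because the pair consists of vertices of $Z\subseteq U_k$, your other worry (whether the witness lies inside the branch sets) disappears. The only remaining point is that $G[U_k]$ contains a cycle at all. This holds because contracting its connected branch sets yields the grid $4$-cycle, while minors of forests are forests.
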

  \begin{claimproof}
    We first claim that every cycle of~$G$ contains two (adjacent) vertices in the same column but on different sides, i.e., of the form $(i,j,0)$ and $(i',j,1)$.
    Indeed, if one removes from $G$ all the edges with endpoints in the same column but on different sides, the resulting graph $G'$ is the disjoint union of $2m$ graphs $Q_{2n-1}$ obtained from two $(2n-1)$-vertex paths $P_{2n-1}$ by making the $j$th vertex of the first path adjacent to the $(j-1)$st and $(j+1)$st of the other path (when these vertices exist).
    In particular, consider a~\emph{leftmost} vertex $(\widehat{i},\widehat{j},s)$ (i.e., minimizing $\widehat{j}$) of a~cycle $C$ that does not contain a~pair $(i,j,0), (i',j,1)$. Its two neighbors on~$C$ must be $(\widehat{i},\widehat{j}+1,0)$ and $(\widehat{i},\widehat{j}+1,1)$; these two vertices form such a~pair.
    See~\cref{fig:cycle-forced-pair} for an illustration.

\begin{figure}[!ht]
\centering
\begin{tikzpicture}[
  edge/.style={draw=black, line width=.32pt, line cap=round},
  cycle/.style={draw=black, line width=1pt, line cap=round, line join=round},
  vertex/.style={circle, draw=black, line width=.45pt,
    minimum size=3.7pt, inner sep=0pt},
  side 0/.style={vertex, fill=white},
  side 1/.style={vertex, fill=black},
  pair/.style={ellipse, draw=black, line width=.45pt,
    inner xsep=2.5pt, inner ysep=2pt}
]
  \foreach \c/\x in {0/0,1/1.25,2/2.50,3/3.75,4/5.00,5/6.25,6/7.50} {
    \coordinate (v\c0) at (\x,.24);
    \coordinate (v\c1) at (\x,-.24);
  }

  \foreach \a/\b in {0/1,1/2,2/3,3/4,4/5,5/6} {
    \foreach \s in {0,1} {
      \foreach \t in {0,1} \draw[edge] (v\a\s) -- (v\b\t);
    }
  }

  \draw[cycle] (v10) -- (v20) -- (v30) -- (v40) -- (v50) --
    (v41) -- (v31) -- (v21) -- cycle;

  \foreach \c in {0,1,2,3,4,5,6} {
    \node[side 0] (p\c0) at (v\c0) {};
    \node[side 1] (p\c1) at (v\c1) {};
  }

  \node[pair, fit=(p20)(p21)] {};
  \node[anchor=south] at (1.25,.58) {$\widehat{j}$};
  \node[anchor=south] at (2.50,.58) {$\widehat{j}+1$};
  \node[above] at (4.38,.42) {$C$};

  \node[side 0] at (2.70,-.90) {};
  \node[anchor=west] at (2.83,-.90) {$s=0$};
  \node[side 1] at (4.55,-.90) {};
  \node[anchor=west] at (4.68,-.90) {$s=1$};
\end{tikzpicture}
\caption{The leftmost-vertex argument in (a~connected component of) $G'$.}
\label{fig:cycle-forced-pair}
\end{figure}
    
    Let $H$ be any graph with three vertex-disjoint and nonadjacent cycles $\Gamma_1, \Gamma_2, \Gamma_3$ such that there is a~path between any two cycles in the non-neighborhood of the third cycle.
    Note that the $5 \times 5$ grid is such a~graph~$H$; see~\cref{fig:grid-three-cycles}.
    \begin{figure}[!ht]
\centering
\begin{tikzpicture}[
  x=1cm,
  y=1cm,
  edge/.style={draw=black, line width=.35pt, line cap=round},
  cycle/.style={draw=black, line width=1.5pt, line cap=round, line join=round},
  connector/.style={draw=black, line width=1pt, line cap=round,
    preaction={draw=white, line width=1.7pt}},
  vertex/.style={circle, draw=black, fill=white, line width=.45pt,
    minimum size=3.5pt, inner sep=0pt}
]
  \foreach \x in {1,...,5} \draw[edge] (\x,1) -- (\x,5);
  \foreach \y in {1,...,5} \draw[edge] (1,\y) -- (5,\y);

  \draw[cycle] (1,4) -- (1,5) -- (2,5) -- (2,4) -- cycle;
  \draw[cycle] (4,4) -- (4,5) -- (5,5) -- (5,4) -- cycle;
  \draw[cycle] (1,1) -- (1,2) -- (2,2) -- (2,1) -- cycle;

  \draw[connector] (2,5) -- (3,5) -- (4,5);
  \draw[connector] (1,4) -- (1,3) -- (1,2);
  \draw[connector] (5,4) -- (5,3) -- (5,2) -- (5,1) -- (4,1) -- (3,1) -- (2,1);

  \foreach \x in {1,...,5} {
    \foreach \y in {1,...,5} \node[vertex] at (\x,\y) {};
  }

  \node at (1.5,4.5) {$\Gamma_1$};
  \node at (4.5,4.5) {$\Gamma_2$};
  \node at (1.5,1.5) {$\Gamma_3$};
  \node[anchor=south] at (3,5.08) {$P_{12}$};
  \node[anchor=east] at (.92,3) {$P_{13}$};
  \node[anchor=west] at (5.08,2.5) {$P_{23}$};
\end{tikzpicture}
\caption{Three appropriate cycles in the $5 \times 5$ grid.}
\label{fig:grid-three-cycles}
\end{figure}
    
    We claim that no such graph $H$ is an induced minor of~$G$.
    The subgraph of~$G$ induced by the union $U_a$ of the branch sets representing $V(\Gamma_a)$ (for any $a \in [3]$) has to contain a~cycle. 
    By the first paragraph, there are three column indices $j_1, j_2, j_3 \in [2n-1]$ such that $U_a$ contains a~pair of the form $(\cdot,j_a,0), (\cdot,j_a,1)$ for each $a \in [3]$.
    Without loss of generality, let us assume that $j_1 \leqslant j_2 \leqslant j_3$.
    Since the cycles $\Gamma_1, \Gamma_2, \Gamma_3$ are nonadjacent, we further have $j_1 < j_2 < j_3$.
    Finally, note that, for any $i,i' \in [2m]$, the union of the closed neighborhoods of $(i,j_2,0)$ and $(i',j_2,1)$ in $G$ contains $C_{j_2}$, and that the removal of $C_{j_2}$ disconnects $C_{j_1}$ from $C_{j_3}$.
    This contradicts the assumption that there is a~path between $\Gamma_1$ and $\Gamma_3$ in the non-neighborhood of~$\Gamma_2$.
  \end{claimproof}
  In particular, the induced subgraph $G_F$ of $G$ excludes the $5 \times 5$ grid as an induced minor.
  We finish the proof by showing that computing the independence number of~$G_F$ would solve \textsc{Max Cut} for the arbitrary graph~$F$.
  
  \begin{claim}
    $\alpha(G_F) = \maxcut(F) + 2 \sum\limits_{v_j v_{j'} \in E(F),~j<j'}(j'-j)$.
  \end{claim}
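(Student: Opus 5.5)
The plan is to prove the two inequalities separately, each by a per-edge count. The rows $R_{e,0}, R_{e,1}$ for $e \in E(F)$ partition $V(G_F)$, so $|I| = \sum_e |I \cap (R_{e,0} \cup R_{e,1})|$ for any independent set $I$. It therefore suffices to show, for each edge $e = v_jv_{j'}$ with $j<j'$, two things. First, every independent set meets $R_{e,0} \cup R_{e,1}$ in at most $2(j'-j)+1$ vertices when $e$ is cut by the bipartition read off from $I$, and at most $2(j'-j)$ otherwise. Second, there is a global independent set achieving these bounds simultaneously for all edges.

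\textbf{Basic row count.} Fix a row $R_{e,0}$; it occupies columns $2j-1, \ldots, 2j'-1$, i.e.\ $2(j'-j)+1$ consecutive columns. Two vertices of the same row in the same column are adjacent: they lie on opposite sides, so the column biclique joins them. Two vertices of the same row in consecutive columns are adjacent via the strong-product edges. Hence $I$ picks at most one vertex per column and never two consecutive columns. This gives $|I \cap R_{e,0}| \le j'-j+1$, with equality only if $I$ uses exactly the columns $2j-1, 2j+1, \ldots, 2j'-1$. In particular, equality forces both endpoint vertices $(e^0,2j-1,0)$ and $(e^0,2j'-1,1)$ into $I$. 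The same holds for $R_{e,1}$, where the forced endpoint vertices are $(e^1,2j-1,1)$ and $(e^1,2j'-1,0)$.

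\textbf{Upper bound.} Given $I$, define the side of $v_k$ as the common side of $I \cap C_{2k-1}$. The column biclique forces all of $I \cap C_{2k-1}$ onto one side; if this set is empty, choose the side arbitrarily. Call $v_k$ in part $A$ if this side is $0$ and in part $B$ otherwise.
\begin{itemize}
\item Both rows of $e$ cannot attain $j'-j+1$: that would put $(e^0,2j-1,0)$ and $(e^1,2j-1,1)$ together in $C_{2j-1}$, and these are adjacent.
\item If exactly one row attains $j'-j+1$, its two endpoint vertices lie in $I$ on opposite sides. This places $v_j$ and $v_{j'}$ in different parts, so $e$ is cut, and the count is at most $2(j'-j)+1$.
\item Otherwise the count is at most $2(j'-j)$.
\end{itemize}
Summing over edges gives $\alpha(G_F) \le \maxcut(F) + 2\sum (j'-j)$.

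\textbf{Lower bound.} Start from a maximum cut $(A,B)$ and put $v_k$'s column on side $0$ if $v_k \in A$ and side $1$ otherwise. Put every even column on side $0$. Then build $I$ edge by edge, using only vertices on the prescribed side of each column; this keeps every column biclique satisfied, and rows are otherwise independent of one another.
\begin{itemize}
\item If $e$ is cut with $v_j \in A$ and $v_{j'} \in B$, take in $R_{e,0}$ all odd columns $2j-1, \ldots, 2j'-1$. The endpoints exist on exactly the needed sides, and interior columns offer both sides, so this gives $j'-j+1$ vertices. In $R_{e,1}$ take the even columns $2j, \ldots, 2j'-2$, giving $j'-j$ vertices. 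The case $v_j \in B$, $v_{j'} \in A$ is symmetric, with the roles of the two rows swapped.
\item If $e$ is not cut, take the $j'-j$ even columns in both rows.
\end{itemize}
Within each row the chosen columns are pairwise non-consecutive, so the set is independent and the totals match the claim.

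The main obstacle is making sure the interaction between rows is fully captured by the column bicliques. Vertices in different rows are adjacent only when they share a column and lie on opposite sides. Consequently, the global side assignment per column, including the fixed choice for even columns, is exactly what guarantees independence in the lower bound. It is also what makes the bipartition well defined in the upper bound.
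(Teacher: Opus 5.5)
Your proof is correct and follows essentially the same route as the paper: a per-edge count using the bipartition read off from the column sides for the upper bound, and alternating-column choices guided by a maximum cut for the lower bound. The differences are cosmetic. For the upper bound, you use that a row attains $j'-j+1$ only when both of its opposite-side endpoint vertices are in $I$, where the paper uses the blocks $D_j,\dots,D_{j'}$; for the lower bound on uncut edges, you take only even columns in both rows, where the paper also includes the endpoint vertices.
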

  \begin{claimproof}
    We first show that \[\alpha(G_F) \geqslant \maxcut(F) + 2 \sum\limits_{v_j v_{j'} \in E(F),~j<j'}(j'-j).\]
    Let $(A_0,A_1)$ be a~bipartition of~$V(F)$ realizing the maximum cut.
    We identify $A_0$ (resp.~$A_1$) with the side $s=0$ (resp.~$s=1$).
    We build the independent set~$I$ as follows.

    For every $v_j \in A_0$ and every incident edge $e = v_jv_{j'} \in E(F)$, add to $I$ the unique vertex of $(R_{e,0} \cup R_{e,1}) \cap C_{2j-1} \cap S_0$.
    Note that this vertex is in $R_{e,0}$ if $j < j'$ and in $R_{e,1}$ if $j' < j$.
    Symmetrically, for every $v_j \in A_1$ and every incident edge $e = v_jv_{j'} \in E(F)$, add to $I$ the unique vertex of $(R_{e,0} \cup R_{e,1}) \cap C_{2j-1} \cap S_1$.

    At this stage, for every $e=v_jv_{j'} \in E(F)$ with, say, $j < j'$, $I$ intersects $R_{e,0} \cup R_{e,1}$ at exactly two vertices: one in $C_{2j-1}$ and one in $C_{2j'-1}$.
    Say that the former (the one in~$C_{2j-1}$) is in $R_{e,s}$, and that the latter is in $R_{e,s'}$, with $s, s' \in \{0,1\}$.
    We add the following vertices of $R_{e,0} \cup R_{e,1}$ to the independent set~$I$:
    \begin{itemize}
    \item for every $v_{j''} \in A_0$ with $j'' \in [j+1,j'-1]$, we add the vertex $(e^s,2j''-1,0)$;
    \item for every $v_{j''} \in A_1$ with $j'' \in [j+1,j'-1]$, we add the vertex $(e^s,2j''-1,1)$; 
    \item for every $j'' \in [j,j'-2]$, we add $(e^{1-s},2j'',0)$ to $I$;
    \item in addition, if $s=s'$, we add $(e^{1-s},2j'-2,0)$ to~$I$.
    \end{itemize}
    
    (Note that for the last two items, we could also have consistently picked the corresponding vertices in~$S_1$.
    In fact, we could have kept only a~single vertex at the intersection of a~row with a~column of even index, but we decided against that to simplify the description of~$G$.)
    This finishes the construction of~$I$.
    It is indeed an independent set as we observed the following rules:
    \begin{itemize}
    \item in every column, we only picked vertices from one side, and
    \item we never picked two vertices in consecutive columns of the same row.
    \end{itemize}
    
    We now determine the size of~$I$.
    For every $e=v_jv_{j'} \in E(F)$ with $j<j'$, $I \cap (R_{e,0} \cup R_{e,1})$ contains exactly one vertex in every column intersected by $R_{e,0} \cup R_{e,1}$ except possibly $C_{2j'-2}$, in which it contains at most one vertex.
    Thus \[|I \cap (R_{e,0} \cup R_{e,1})| \in \{2(j'-j),2(j'-j)+1\}.\]
    Furthermore, if $e$ is in the cut $(A_0,A_1)$, then $s=s'$ and $|I \cap (R_{e,0} \cup R_{e,1})| = 2(j'-j)+1$.
    See \cref{fig:alpha-lower-bound} for an illustration.
    Therefore, \[|I| = \maxcut(F) + 2 \sum\limits_{v_j v_{j'} \in E(F),~j<j'}(j'-j),~~\text{as desired}.\]

\begin{figure}[!ht]
\centering
\begin{tikzpicture}[
  edge/.style={draw=black, line width=.3pt, line cap=round},
  vertex/.style={circle, draw=black, line width=.4pt, minimum size=3.2pt, inner sep=0pt},
  side 0/.style={vertex, fill=white},
  side 1/.style={vertex, fill=black},
  chosen/.style={circle, draw=black, line width=.7pt, minimum size=5.5pt, inner sep=0pt},
  panel label/.style={font=\small},
  count/.style={font=\small, text=blue}
]
  \foreach \p/\X/\a/\b in {c/0/0/1,n/7.10/0/0} {
    \begin{scope}[xshift={\X cm}]
    \foreach \r/\y in {0/.38,1/-.38} {
      \foreach \c/\x in {1/.95,2/1.90,3/2.85,4/3.80,5/4.75} {
        \coordinate (v\p\r\c0) at ({\x-.075},{\y+.105});
        \coordinate (v\p\r\c1) at ({\x+.075},{\y-.105});
      }
    }
    \coordinate (l\p0) at (0,.49);
    \coordinate (l\p1) at (0,-.49);
    \coordinate (r\p0) at (5.70,.26);
    \coordinate (r\p1) at (5.70,-.26);

    \foreach \c in {1,2,3,4,5} {
      \foreach \r in {0,1} {
        \foreach \q in {0,1} \draw[edge] (v\p\r\c0) -- (v\p\q\c1);
      }
    }
    \draw[edge] (l\p0) -- (l\p1);
    \draw[edge] (r\p0) -- (r\p1);
    \foreach \r in {0,1} {
      \foreach \a/\b in {1/2,2/3,3/4,4/5} {
        \foreach \s in {0,1} {
          \foreach \t in {0,1} \draw[edge] (v\p\r\a\s) -- (v\p\r\b\t);
        }
      }
      \foreach \s in {0,1} {
        \draw[edge] (l\p\r) -- (v\p\r1\s);
        \draw[edge] (v\p\r5\s) -- (r\p\r);
      }
    }

    \foreach \r in {0,1} {
      \foreach \c in {1,2,3,4,5} {
        \node[side 0] (p\p\r\c0) at (v\p\r\c0) {};
        \node[side 1] (p\p\r\c1) at (v\p\r\c1) {};
      }
    }
    \node[side 0] (pl\p0) at (l\p0) {};
    \node[side 1] (pl\p1) at (l\p1) {};
    \node[side 1] (pr\p0) at (r\p0) {};
    \node[side 0] (pr\p1) at (r\p1) {};

    \node[panel label] at (2.85,1.15) {$v_j\in A_{\a},\ v_{j'}\in A_{\b}$};
    \node[panel label, anchor=south] at (0,.72) {$C_{2j-1}$};
    \node[panel label, anchor=south] at (5.70,.72) {$C_{2j'-1}$};
    \end{scope}
  }

  \foreach \u in {plc0,pc110,pc021,pc130,pc040,pc150,prc0} \node[chosen, fit=(\u)] {};
  \foreach \u in {pln0,pn021,pn040,prn1,pn110,pn130} \node[chosen, fit=(\u)] {};

  \node[count] at (2.85,-1.02) {$|I\cap(R_{e,0}\cup R_{e,1})|=2(j'-j)+1$};
  \node[count] at (9.95,-1.02) {$|I\cap(R_{e,0}\cup R_{e,1})|=2(j'-j)$};

\end{tikzpicture}
\caption{Example of $I\cap(R_{e,0}\cup R_{e,1})$ (circled vertices) when $v_j$ and $v_{j'}$ are on opposite sides of the bipartition $(A_0,A_1)$ (left) and when they are on the same side (right).}
\label{fig:alpha-lower-bound}
\end{figure}

    We now show that \[\alpha(G_F) \leqslant \maxcut(F) + 2 \sum\limits_{v_j v_{j'} \in E(F),~j<j'}(j'-j).\]
    We fix some independent set~$I$ of~$G_F$.
    In every column $C_{2j-1}$ of $v_j$, $I$ can only contain vertices from one side (because of the biclique between $S_0 \cap C_{2j-1}$ and $S_1 \cap C_{2j-1}$).
    The set $I$ thus defines a~partition $(A_0,A_1)$ of $V(F)$ where $v_j \in V(F)$ is in~$A_0$ if $I \cap C_{2j-1} \subseteq S_0$, and $v_j \in A_1$ if instead $I \cap C_{2j-1} \subseteq S_1$ (vertices whose columns do not intersect $I$ are assigned arbitrarily to either part).
    For every $e=v_jv_{j'} \in E(F)$ with $j<j'$, we partition $R_{e,0} \cup R_{e,1}$ into $D_j, D_{j+1}, \ldots, D_{j'-1}, D_{j'}$, where
    \begin{itemize}
    \item $D_j := C_{2j-1} \cap (R_{e,0} \cup R_{e,1})$, and
    \item for every $j'' \in [j+1,j']$, $D_{j''} := (C_{2j''-2} \cup C_{2j''-1}) \cap (R_{e,0} \cup R_{e,1})$.
    \end{itemize}

    Observe that $I$ can contain at most one vertex in $D_j$, and at most two vertices in each $D_{j''}$ with $j'' \in [j+1,j']$.
    This implies that $|I \cap (R_{e,0} \cup R_{e,1})| \leqslant 1+2(j'-j)$; see~\cref{fig:alpha-upper-bound}.
    \begin{figure}[!ht]
\centering
\begin{tikzpicture}[
  edge/.style={draw=black, line width=.32pt, line cap=round},
  vertex/.style={circle, draw=black, line width=.45pt, minimum size=3.7pt, inner sep=0pt},
  side 0/.style={vertex, fill=white},
  side 1/.style={vertex, fill=black},
  block/.style={draw=black!45, densely dashed, rounded corners=1.5pt, inner sep=3pt},
  chosen/.style={circle, draw=black, line width=.8pt, minimum size=6.4pt, inner sep=0pt}
]
  \foreach \r/\y in {0/.43,1/-.43} {
    \foreach \c/\x in {1/1.10,2/2.20,3/3.45,4/4.55,5/6.25} {
      \coordinate (v\r\c0) at ({\x-.09},{\y+.13});
      \coordinate (v\r\c1) at ({\x+.09},{\y-.13});
    }
  }
  \coordinate (l0) at (0,.56);
  \coordinate (l1) at (0,-.56);
  \coordinate (r0) at (7.35,.30);
  \coordinate (r1) at (7.35,-.30);

  \foreach \c in {1,2,3,4,5} {
    \foreach \r in {0,1} {
      \foreach \q in {0,1} \draw[edge] (v\r\c0) -- (v\q\c1);
    }
  }
  \draw[edge] (l0) -- (l1);
  \draw[edge] (r0) -- (r1);
  \foreach \r in {0,1} {
    \foreach \a/\b in {1/2,2/3,3/4} {
      \foreach \s in {0,1} {
        \foreach \t in {0,1} \draw[edge] (v\r\a\s) -- (v\r\b\t);
      }
    }
    \foreach \s in {0,1} {
      \draw[edge] (l\r) -- (v\r1\s);
      \draw[edge] (v\r5\s) -- (r\r);
    }
  }

  \foreach \r in {0,1} {
    \foreach \c in {1,2,3,4,5} {
      \node[side 0] (p\r\c0) at (v\r\c0) {};
      \node[side 1] (p\r\c1) at (v\r\c1) {};
    }
  }
  \node[side 0] (pl0) at (l0) {};
  \node[side 1] (pl1) at (l1) {};
  \node[side 1] (pr0) at (r0) {};
  \node[side 0] (pr1) at (r1) {};

  \node[block, fit=(pl0)(pl1)] (Dj) {};
  \node[block, fit=(p010)(p011)(p110)(p111)(p020)(p021)(p120)(p121)] (Djone) {};
  \node[block, fit=(p030)(p031)(p130)(p131)(p040)(p041)(p140)(p141)] (Djtwo) {};
  \node[block, fit=(p050)(p051)(p150)(p151)(pr0)(pr1)] (Djp) {};

  \node[anchor=south] at (Dj.north) {$D_j$};
  \node[anchor=south] at (Djone.north) {$D_{j+1}$};
  \node[anchor=south] at (Djtwo.north) {$D_{j+2}$};
  \node[anchor=south] at (Djp.north) {$D_{j'}$};
  \node[anchor=north] at (Dj.south) {$\leqslant 1$};
  \node[anchor=north] at (Djone.south) {$\leqslant 2$};
  \node[anchor=north] at (Djtwo.south) {$\leqslant 2$};
  \node[anchor=north] at (Djp.south) {$\leqslant 2$};

  \foreach \u in {pl0,p110,p021,p130,p040,p150,pr0} \node[chosen, fit=(\u)] {};
  \node[fill=white, inner xsep=3pt] at (5.40,.43) {$\cdots$};
  \node[fill=white, inner xsep=3pt] at (5.40,-.43) {$\cdots$};
  \node[anchor=north] at (3.68,-1.22)
    {\textcolor{blue}{$|I\cap(R_{e,0}\cup R_{e,1})|\leqslant 1+2(j'-j)$}};

\end{tikzpicture}
\caption{The partition $D_j, \ldots, D_{j'}$ and an upper bound on $|I\cap(R_{e,0}\cup R_{e,1})|$.}
\label{fig:alpha-upper-bound}
    \end{figure}
    
    Furthermore, if $v_j$ and $v_{j'}$ are on the same side of $(A_0,A_1)$, we claim that \[|I \cap (R_{e,0} \cup R_{e,1})| \leqslant 2(j'-j).\]
    Indeed, in each of the two rows, at~least one of its two endpoint vertices cannot belong to $I$, because the endpoints of a~row are on opposite sides.
    After deleting that forbidden endpoint, the row occupies $2(j'-j)$ consecutive columns.
    Pairing these columns consecutively partitions the remaining row into $j'-j$ cliques.
    Hence $I$ contains at~most $j'-j$ vertices in each row.

    We conclude that $|I| \leqslant \maxcut(F) + 2 \sum\limits_{v_j v_{j'} \in E(F),~j<j'}(j'-j)$.
  \end{claimproof}

  One can easily compute $2 \sum_{v_j v_{j'} \in E(F),~j<j'}(j'-j)$ in polynomial time.
  So we conclude that solving \smis on $G_F$ is as hard as solving \textsc{Max Cut} on~$F$.
\end{proof}

We obtain the following corollary by reducing from \textsc{Max Cut} on cubic graphs.

\begin{repcorollary}{thm:eth}
  Unless the ETH fails, \mis requires $2^{\Omega(\sqrt{n})}$ time in $n$-vertex graphs excluding the $5 \times 5$ grid as an induced minor.
\end{repcorollary}

\begin{proof}
  By the Sparsification Lemma~\cite{sparsification}, under the ETH, $N$-variable \textsc{3-SAT-B} for some constant~$B$ (\textsc{3-SAT} where each variable appears at~most~$B$ times) has no $2^{o(N)}$-time algorithm.
  The standard linear-size occurrence-splitting construction underlying \cite[Lemma~5]{Yannakakis78}, followed by the reduction of~\cite[Theorem~13]{Yannakakis78}, gives a linear-size reduction to~\textsc{Max Cut} on cubic graphs.
  We finally observe that applying the reduction in the proof of~\cref{thm:main} to cubic \textsc{Max Cut} instances $F$ produces \smis instances with $n \leqslant 12|V(F)|^2$ vertices, so $n=\Oh(N^2)$.
\end{proof}

\paragraph*{AI disclosure.}
The reduction of~\cref{thm:main} was suggested by GPT-6 Pro after, embarrassingly, a~mere 23-minute think.
We simplified and improved the original complicated argument that the constructed instances exclude the $20 \times 20$ grid as an induced minor.
The write-up is entirely due to the authors.

\bibliography{main}

\end{document}